\documentclass[preprint,12pt]{elsarticle}

\usepackage{amsmath,amssymb,amsthm,mathtools}
\usepackage{bm}
\usepackage{booktabs}
\usepackage{graphicx}
\usepackage{float}
\usepackage{geometry}
\usepackage[
  colorlinks=true,
  linkcolor=blue,
  citecolor=blue,
  urlcolor=blue
]{hyperref}
\usepackage[capitalize]{cleveref}

\usepackage{lineno}
\journal{Nuclear Physics B}

\newtheorem{theorem}{Theorem}
\newtheorem{proposition}{Proposition}

\theoremstyle{remark}

\newcommand{\Mfp}{\mathcal{M}}
\newcommand{\KA}{K_A}
\newcommand{\Hzero}{H_0}
\newcommand{\VA}{V_A}
\newcommand{\ad}{\operatorname{ad}}
\newcommand{\spec}{\operatorname{spec}}

\newcommand{\Ker}{\operatorname{Ker}}
\newcommand{\Sdweak}{\mathcal{S}_{d,\infty}}

\newcommand{\acrit}{a_{\mathrm{crit}}}
\newcommand{\Scrit}{S_{\mathrm{crit}}}
\newcommand{\Qmin}{Q_{\mathrm{min}}}
\newcommand{\atil}{\tilde a_{\mathrm{crit}}}
\newcommand{\dd}{\,d}
\newcommand{\anp}{a_{\mathrm{np}}}
\newcommand{\Lam}{\Lambda}
\newcommand{\Riesz}{R}
\newcommand{\ket}[1]{\lvert #1\rangle}
\newcommand{\bra}[1]{\langle #1\rvert}

\begin{document}

\begin{frontmatter}

\title{Birman-Schwinger Formulation of the Faddeev-Popov Zero-Mode Problem}

\author{Daniel G. Tedesco} 

\affiliation{organization={ESEHL/PPGENT - UNINTER},
            addressline={Av. Sete de Setembro, 4228}, 
            city={Curitiba},
            postcode={80250-085}, 
            state={Paran\'a},
            country={Brazil}}

\begin{abstract}
After removing the constant adjoint modes associated with global gauge rotations, we recast the Landau-gauge Faddeev-Popov zero-mode problem in Birman-Schwinger form on a periodic domain. The construction yields a self-adjoint Faddeev-Popov realization under the stated regularity assumptions and a normalized operator with a fixed spectral criterion for the first Gribov horizon. The same formulation relates the horizon condition to the ghost resolvent at fixed gauge background while distinguishing fixed-background spectral information from ensemble-averaged propagators. For a periodic transverse background in $SU(2)$ Yang-Mills theory, the zero-mode equation reduces to a Mathieu problem, allowing the horizon threshold to be determined independently through spectral and finite-channel methods. The construction also yields explicit volume dependence for the critical background and its classical action.
\end{abstract}


\begin{highlights} 
\item Faddeev-Popov zero modes are recast in Birman-Schwinger form. 
\item The first Gribov horizon becomes the spectral threshold $-1$. 
\item Cwikel estimates control the singular values of the normalized operator. 
\item The fixed-background ghost dressing is a diagonal resolvent. 
\item A periodic $SU(2)$ background reduces the problem to a Mathieu chain. 
\end{highlights}

\begin{keyword}
Faddeev-Popov operator \sep Gribov horizon \sep Birman-Schwinger principle \sep ghost propagator \sep weak Schatten classes \sep Cwikel estimates
\end{keyword}

\end{frontmatter}



\section{Introduction}
\label{sec:introduction}

Gribov showed that, in a non-Abelian gauge theory, the transversality condition does not uniquely select a representative on each gauge orbit. In the Hamiltonian formulation of Yang-Mills theory in Coulomb gauge, distinct gauge-equivalent configurations may satisfy the same gauge condition, thereby giving rise to what are now known as Gribov copies \cite{Gribov1978,VandersickelZwanziger2012}. Infinitesimally, this loss of uniqueness is associated with zero modes of the corresponding Faddeev-Popov operator, and the same mechanism applies to the covariant Landau gauge considered here. Once the constant adjoint modes generated by global gauge transformations are removed, the first Gribov region $\Omega$ is defined as the set of transverse gauge configurations for which the reduced Faddeev-Popov operator is positive definite. Its boundary, the first Gribov horizon $\partial\Omega$, is reached when the lowest nontrivial eigenvalue of this operator vanishes.

Zwanziger realized the restriction to $\Omega$ in Euclidean Landau gauge through the horizon function, producing a local and renormalizable action that ties the Gribov restriction to the infrared ghost sector \cite{Zwanziger1989,VandersickelZwanziger2012}. The equivalence between the horizon condition and Gribov's no-pole condition was later established to all orders in the standard Landau gauge \cite{CapriDudalGuimaraesPalharesSorella2013}, and subsequent work incorporated condensates, infrared mass scales, BRST control, and gauge-parameter dependence \cite{DudalSobreiroSorellaVerschelde2005,CapriDudalFiorentiniGuimaraesJusto2016,CapriUniversal2018,MintzPalharesPeruzzoSorella2018,DudalFelixPalharesRondeauVercauteren2019,JustoPereiraSobreiro2022}. The geometry behind this construction has also been studied on the quotient of the space of connections by the gauge group, where the orbit-space distance and its infinitesimal metric measure the local separation between physically inequivalent configurations \cite{OrlandMetric1997}, and where gauge-invariant coordinates realize the same quotient with singularities tied to additional harmonic directions and to chart boundaries that contain Gribov-type degeneracies \cite{OrlandGaugeCoordinates2004}. These analyses place the Faddeev-Popov operator inside the local geometry of the projection from connection space to orbit space.

Explicit normalizable zero modes are known in several dimensions and gauges \cite{GuimaraesSorella2011,LandimVilarVenturaLemes2012}, and lattice calculations show that the low-lying spectrum of the Landau-gauge Faddeev-Popov operator controls the infrared ghost propagator and drifts toward the origin as the volume grows \cite{SternbeckIlgenfritzMuellerPreussker2005,Sternbeck2006,Greensite2010,CucchieriMendes2013}. Bloch-wave methods extend this picture beyond the volumes reachable by direct simulation \cite{CucchieriMendesBloch2016}. These results motivate a formulation of the horizon condition in terms of a normalized operator whose spectral threshold is fixed once and for all, rather than in terms of the bare second-order operator whose eigenvalues move with the volume.

For a transverse background $A$ the formal Landau-gauge expression is split as
\begin{equation}
  \Mfp=\Hzero+\VA,\qquad \Hzero=-\partial^2,\qquad \VA=-g\,\ad(A_i)\partial_i,
  \label{eq:intro-split}
\end{equation}
and the constant adjoint scalars are projected out so that $\Hzero$ is strictly positive on the reduced ghost space. For smooth coefficients one may write
\begin{equation}
  \KA=\Hzero^{-1/2}\VA\Hzero^{-1/2},
  \label{eq:intro-K}
\end{equation}
whereas at the critical regularity used below $\KA$ is defined by the corresponding bounded quadratic form. In either formulation the relation with the Faddeev-Popov operator is the form congruence
\begin{equation}
  q_A[u,v]
  =\big\langle \Hzero^{1/2}u,(1+\KA)\Hzero^{1/2}v\big\rangle,
  \label{eq:intro-congruence}
\end{equation}
which yields
\begin{equation}
  0\in\spec(\Mfp)\quad\Longleftrightarrow\quad -1\in\spec(\KA)
  \label{eq:intro-criterion}
\end{equation}
whenever the associated self-adjoint realization is used. Birman-Schwinger reductions trade zero-energy or bound-state questions for spectral conditions on compact operators with quantitative ideal estimates \cite{Birman1961,Schwinger1961,Simon2005,BehrndtTerElstGesztesy2020}. The map $u\mapsto\Hzero^{1/2}u$ is a bijection between the Faddeev-Popov form domain and the reduced $L^2$ space, so the congruence preserves the dimensions of the positive, negative, and null form subspaces even though it does not preserve numerical eigenvalues.

Because $\VA$ is first order, the coefficient scale selected by the form estimate is $A\in L^d$ for $d\ge3$. We prove that the associated $\KA$ belongs to the weak Schatten ideal $\Sdweak$ and satisfies
\begin{equation}
  s_n(\KA)\le C_{d,N}\,g\,\|A\|_{L^d}\,n^{-1/d},
  \label{eq:intro-cwikel}
\end{equation}
using a Cwikel estimate established directly for tori \cite{McDonaldPonge2022}. The dimension $d=2$ is an endpoint of a different type. The constant mode removed from the ghost space must remain projected out in the Birman-Schwinger counting identity, and the resulting compressed Schr\"odinger operator differs from the unprojected comparison by at most one negative direction in the scalar problem. This codimension-one compression does not alter the failure of the uniform  $L^2$ estimate. The positive periodic endpoint is instead controlled by the Orlicz condition $|f|^2\in L\log L$, equivalently $f\in L_\Phi$ with $\Phi(t)=t^2\log(e+t^2)$, through the Cwikel-Solomyak estimate on the torus \cite{SukochevZanin2022}. Every smooth background used in the solvable sector satisfies this hypothesis.

The same normalized operator enters the fixed-background ghost propagator. Its diagonal resolvent matrix elements give the ghost dressing at a fixed background, and the quadratic Born term is the first nonvanishing contribution to the color-averaged form factor. In four-dimensional Landau gauge, refined Gribov-Zwanziger theory, functional approaches, and lattice calculations support a decoupling pattern with an infrared-finite gluon propagator and a ghost dressing that is not generically enhanced \cite{DudalGraceySorellaVandersickelVerschelde2008,SternbeckIlgenfritzMuellerPreussker2005,EichmannPawlowskiSilva2021}. A pole of $(1+\KA)^{-1}$ at a fixed horizon configuration is a configuration-resolved spectral property, whereas the quantum propagators depend on the full gauge-fixed measure, including the Faddeev-Popov determinant, correlations among configurations, the treatment of Gribov copies, and the infrared and infinite-volume limits.

For a periodic transverse $SU(2)$ background, the spectral problem can be solved explicitly. Embedded-Abelian configurations of this class on a two-torus were studied by Orland and Semenoff in their analysis of extremal curves in Yang-Mills orbit space, including the associated Faddeev-Popov quadratic form and its intersections with the Gribov horizon \cite{OrlandSemenoff2000}. Up to normalization and a global color rotation, the background used here is the $q_2=0$ specialization of their family, now analyzed on the reduced ghost space so that the threshold is the first nontrivial crossing rather than a background-independent zero mode. For this family the eigenvalue problem becomes a Mathieu recurrence, the exact threshold can be compared with its Born approximation and with the corrections of a Feshbach reduction, and the volume dependence of the critical amplitude and classical action is available in closed form. These configurations provide solvable reference backgrounds, while their statistical weight in the Yang-Mills measure requires dynamical information beyond the single-background analysis \cite{SternbeckIlgenfritzMuellerPreussker2005,CucchieriMendes2013,OliveiraPaivaSilva2023,EichmannPawlowskiSilva2021}.

\section{The Faddeev-Popov operator and its Birman-Schwinger congruence}
\label{sec:bs-reduction}

Let $r=N^2-1$ and set
\begin{equation}
  \mathcal H=L^2(T^d;\mathbb C^r),\qquad
  (\Pi_0u)(x)=\frac{1}{|T^d|}\int_{T^d}u(y)\,\dd^dy,
  \qquad P=1-\Pi_0.
  \label{eq:mean-zero-projection}
\end{equation}
The reduced ghost space is $\mathcal H_\perp=P\mathcal H$. The free operator
\begin{equation}
  \Hzero=-\Delta\big|_{\mathcal H_\perp},
  \qquad
  D(\Hzero)=H^2(T^d;\mathbb C^r)\cap\mathcal H_\perp,
  \label{eq:h0-domain}
\end{equation}
is strictly positive. On the torus of side $2\pi$ its lowest eigenvalue is one, and its form domain is
\begin{equation}
  \mathcal Q=H^1(T^d;\mathbb C^r)\cap\mathcal H_\perp.
  \label{eq:form-domain}
\end{equation}
We write $\Lam=\Hzero^{-1/2}$, so $\Lam:\mathcal H_\perp\to\mathcal Q$ is bounded and
\begin{equation}
  \|\nabla\Lam\phi\|_2=\|\phi\|_2.
  \label{eq:lambda-isometry}
\end{equation}

For a smooth transverse background, the adjoint covariant derivative and the Faddeev-Popov operator are
\begin{equation}
  D_i^{ab}=\partial_i\delta^{ab}+g f^{acb}A_i^c,
  \label{eq:adjoint-covariant-derivative}
\end{equation}
\begin{equation}
  \Mfp^{ab}=-\partial^2\delta^{ab}-g f^{acb}A_i^c\partial_i,
  \label{eq:fp-landau}
\end{equation}
where Landau transversality removes the zeroth-order term proportional to $\partial_iA_i$. At lower regularity the second term in \eqref{eq:fp-landau} need not define an $L^2$ operator by itself, so the self-adjoint realization will be defined from its form. For $u,v\in\mathcal Q$ set
\begin{equation}
  h_0[u,v]=\int_{T^d}\partial_i\overline{u^a}\,\partial_i v^a\,\dd^dx,
  \label{eq:h0-form}
\end{equation}
\begin{equation}
  b_A[u,v]
  =-g\int_{T^d}\overline{u^a}\,f^{acb}A_i^c\,\partial_i v^b\,\dd^dx,
  \qquad q_A=h_0+b_A.
  \label{eq:interaction-form}
\end{equation}
Whenever $b_A$ is a bounded form on $\mathcal Q$, the normalized operator is defined by
\begin{equation}
  \langle\phi,\KA\psi\rangle
  =b_A[\Lam\phi,\Lam\psi],
  \qquad \phi,\psi\in\mathcal H_\perp.
  \label{eq:k-form-definition}
\end{equation}
This definition does not require $A_i\partial_i$ to map $H^1$ into $L^2$.

For smooth periodic fields the form definition reproduces the Fourier kernel used later. With $\omega^a(x)=\sum_p e^{ip\cdot x}\omega^a(p)$ and $A_i^a(x)=\sum_qe^{iq\cdot x}A_i^a(q)$,
\begin{equation}
  \Mfp^{ab}(p,k)
  =p^2\delta^{ab}\delta_{p,k}-ig f^{acb}k_iA_i^c(p-k).
  \label{eq:m-kernel}
\end{equation}
Reality and transversality imply $(p_i-k_i)A_i^c(p-k)=0$, hence
\begin{equation}
  \Mfp^{ab}(p,k)
  =p^2\delta^{ab}\delta_{p,k}
   -\frac{ig}{2}f^{acb}(p_i+k_i)A_i^c(p-k),
  \label{eq:m-kernel-symmetric}
\end{equation}
and, on nonzero momenta,
\begin{equation}
  \KA^{ab}(p,k)
  =-ig f^{acb}\frac{k_iA_i^c(p-k)}{|p|\,|k|}.
  \label{eq:k-kernel}
\end{equation}
The symmetric numerator in \eqref{eq:m-kernel-symmetric} makes Hermiticity manifest. Equation \eqref{eq:k-kernel} is used only when the background is regular enough for the Fourier kernel to be interpreted pointwise. The form definition \eqref{eq:k-form-definition} is the primary definition at critical regularity.

\begin{proposition}[Form congruence and Birman-Schwinger criterion]
\label{prop:bs-criterion}
Assume that $q_A$ is a closed, lower-semibounded symmetric form on $\mathcal Q$ and that the form \eqref{eq:k-form-definition} defines a bounded self-adjoint operator $\KA$ on $\mathcal H_\perp$. Let $\Mfp_A$ be the self-adjoint operator associated with $q_A$. Then
\begin{equation}
  q_A[u,v]
  =\big\langle \Hzero^{1/2}u,(1+\KA)\Hzero^{1/2}v\big\rangle,
  \qquad u,v\in\mathcal Q,
  \label{eq:form-congruence}
\end{equation}
and
\begin{equation}
  \Ker\Mfp_A\simeq\Ker(1+\KA),
  \qquad
  n_\pm(\Mfp_A)=n_\pm(1+\KA),
  \qquad
  n_0(\Mfp_A)=n_0(1+\KA).
  \label{eq:inertia}
\end{equation}
If $\KA$ is compact, then
\begin{equation}
  0\in\spec(\Mfp_A)\quad\Longleftrightarrow\quad -1\in\spec(\KA).
  \label{eq:bs-criterion}
\end{equation}
\end{proposition}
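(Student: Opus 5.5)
The proof proceeds in three stages: verify the congruence on the form domain, transfer kernels and inertia through the bijection $\Hzero^{1/2}:\mathcal Q\to\mathcal H_\perp$, and use compactness of $\KA$ to turn the kernel statement into the spectral criterion.

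\emph{The congruence.} Since $\Lam=\Hzero^{-1/2}$ is a bijection from $\mathcal H_\perp$ onto $\mathcal Q$, every $u\in\mathcal Q$ can be written $u=\Lam\phi$ with $\phi=\Hzero^{1/2}u$. By the spectral theorem $h_0[u,v]=\langle\Hzero^{1/2}u,\Hzero^{1/2}v\rangle$, and this is consistent with \eqref{eq:lambda-isometry}. The definition \eqref{eq:k-form-definition} gives $b_A[u,v]=\langle\Hzero^{1/2}u,\KA\Hzero^{1/2}v\rangle$. Adding the two identities yields \eqref{eq:form-congruence}.

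\emph{Kernel isomorphism.} Take $u\in\Ker\Mfp_A$. Then $u\in D(\Mfp_A)\subset\mathcal Q$, and $q_A[w,u]=0$ for all $w\in\mathcal Q$. Set $\psi=\Hzero^{1/2}u$. By \eqref{eq:form-congruence}, $\langle\chi,(1+\KA)\psi\rangle=0$ for every $\chi=\Hzero^{1/2}w$, and these $\chi$ range over all of $\mathcal H_\perp$. Hence $(1+\KA)\psi=0$. Conversely, let $(1+\KA)\psi=0$ and put $u=\Lam\psi\in\mathcal Q$. Then $q_A[w,u]=0=\langle w,0\rangle$ for all $w\in\mathcal Q$. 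The representation theorem for closed semibounded forms (Kato's first representation theorem) then places $u$ in $D(\Mfp_A)$ with $\Mfp_A u=0$. So $\Hzero^{1/2}$ restricts to a linear bijection between the two kernels.

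\emph{Inertia.} Define $n_\pm(\Mfp_A)$ as the maximal dimension of a subspace of $\mathcal Q$ on which $\pm q_A$ is positive definite. By the min-max principle this equals the dimension of the spectral subspace of $\Mfp_A$ on $(0,\infty)$, respectively $(-\infty,0)$, whenever that is finite, and it is infinite otherwise. Likewise, $n_\pm(1+\KA)$ is the maximal dimension of a subspace of $\mathcal H_\perp$ on which the bounded form $\pm\langle\phi,(1+\KA)\phi\rangle$ is positive definite. Because $\Hzero^{1/2}$ is a linear bijection intertwining the two forms, it maps each such subspace onto one of the same dimension. This is Sylvester's law of inertia in infinite dimensions. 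The null counts agree by the kernel isomorphism.

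\emph{Spectral criterion.} The equivalence \eqref{eq:bs-criterion} needs more than the kernel statement, because $0$ could lie in the continuous or essential spectrum. This is where compactness of $\KA$ enters.
\begin{itemize}
\item[$(\Leftarrow)$] If $-1\in\spec(\KA)$ with $\KA$ compact, then $-1\neq0$ is an eigenvalue. The kernel isomorphism gives $0\in\spec(\Mfp_A)$.
\item[$(\Rightarrow)$] Suppose $-1\notin\spec(\KA)$, so $1+\KA$ is boundedly invertible on $\mathcal H_\perp$. We show $\Mfp_A$ is boundedly invertible.
\end{itemize}
For the second direction, take $f\in\mathcal H_\perp$ and set
\[
u=\Lam(1+\KA)^{-1}\Lam f\in\mathcal Q .
\]
Then for every $w\in\mathcal Q$,
\[
q_A[w,u]=\langle\Hzero^{1/2}w,\Lam f\rangle=\langle w,f\rangle .
\]
By the representation theorem, $u\in D(\Mfp_A)$ and $\Mfp_A u=f$. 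Injectivity of $\Mfp_A$ follows from the kernel isomorphism. Therefore
\[
\Mfp_A^{-1}=\Lam(1+\KA)^{-1}\Lam ,
\]
which is bounded because $\Lam$ is bounded. Hence $0\notin\spec(\Mfp_A)$.

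I expect the main obstacle to be this $(\Rightarrow)$ direction. The congruence controls form signatures and kernels but not the continuous spectrum near $0$, so we need the explicit resolvent formula $\Mfp_A^{-1}=\Lam(1+\KA)^{-1}\Lam$ together with a careful appeal to the representation theorem. Alongside this, the identification of $\Mfp_A$ with its form must use only the closedness and semiboundedness assumed on $q_A$.
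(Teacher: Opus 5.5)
Your proof is correct. The congruence, the kernel isomorphism and the Sylvester-type inertia argument are the same as in the paper. The genuine difference is in the direction $0\in\spec(\Mfp_A)\Rightarrow -1\in\spec(\KA)$.

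The paper argues that the compact embedding $\mathcal Q\hookrightarrow\mathcal H_\perp$ gives $\Mfp_A$ compact resolvent, so $0$ can lie in $\spec(\Mfp_A)$ only as an eigenvalue. It then applies the kernel isomorphism. You instead prove the contrapositive directly. You show that bounded invertibility of $1+\KA$ forces bounded invertibility of $\Mfp_A$, by checking through the first representation theorem that $u=\Lam(1+\KA)^{-1}\Lam f$ satisfies $\Mfp_A u=f$.

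Your route has three advantages:
\begin{itemize}
\item It needs no Rellich compactness.
\item It avoids the tacit step in the paper's compact-resolvent claim, namely that the $q_A$-form norm is equivalent to the $H^1$ norm on $\mathcal Q$. That step follows from boundedness of $b_A$ on $\mathcal Q$ together with the open mapping theorem.
\item It yields the explicit identity $\Mfp_A^{-1}=\Lam(1+\KA)^{-1}\Lam$. The paper uses this identity in Section~\ref{sec:ghost-resolvent} with only the remark ``inverting the congruence'', so your argument supplies a justification the paper leaves implicit.
\end{itemize}

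The paper's route is shorter. It also gives the extra information that $\spec(\Mfp_A)$ is discrete, so both sides of the criterion become eigenvalue statements. In both arguments, compactness of $\KA$ enters only in the direction $-1\in\spec(\KA)\Rightarrow 0\in\spec(\Mfp_A)$.
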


\begin{proof}
For $u,v\in\mathcal Q$, set $\phi=\Hzero^{1/2}u$ and $\psi=\Hzero^{1/2}v$. Equation \eqref{eq:lambda-isometry} gives $h_0[u,v]=\langle\phi,\psi\rangle$, while \eqref{eq:k-form-definition} gives $b_A[u,v]=\langle\phi,\KA\psi\rangle$, proving \eqref{eq:form-congruence}. The map $S=\Hzero^{1/2}:\mathcal Q\to\mathcal H_\perp$ is bijective with inverse $\Lam$, so it maps positive, negative, and null subspaces of $q_A$ bijectively onto the corresponding subspaces of the bounded form of $1+\KA$. This proves the inertia identities.

If $u\in\Ker\Mfp_A$, then $q_A[u,v]=0$ for every $v\in\mathcal Q$. Equation \eqref{eq:form-congruence} and the surjectivity of $S$ imply $(1+\KA)Su=0$. Conversely, if $(1+\KA)\eta=0$ and $u=\Lam\eta$, then $q_A[u,v]=0$ for all $v\in\mathcal Q$. The definition of the operator associated with a closed form therefore gives $u\in D(\Mfp_A)$ and $\Mfp_Au=0$. The compact embedding $\mathcal Q\hookrightarrow\mathcal H_\perp$ gives $\Mfp_A$ compact resolvent, so $0\in\spec(\Mfp_A)$ is equivalent to a nontrivial kernel. If $\KA$ is compact, the nonzero spectral value $-1$ can likewise occur only as an eigenvalue. Together with the kernel isomorphism, these facts prove \eqref{eq:bs-criterion}.
\end{proof}

For smooth transverse backgrounds, $\Mfp_A$ coincides with the differential operator \eqref{eq:fp-landau}. The proposition therefore contains the classical elliptic case used in the periodic example without invoking elliptic regularity in the reverse Birman-Schwinger implication. The next section identifies a lower-regularity class in which the form assumptions of the proposition hold.

\section{Quadratic-form realization and self-adjointness at critical regularity}
\label{sec:self-adjointness-scale}

Let $d\ge3$ and let $A\in L^d(T^d;\mathfrak{su}(N)^d)$ satisfy $\partial_iA_i=0$ in the distributional sense. H\"older's inequality and the mean-zero Sobolev embedding give, for $u,v\in\mathcal Q$,
\begin{equation}
  |b_A[u,v]|
  \le C_{d,N}g\,\|A\|_{L^d}
     \|u\|_{L^{2d/(d-2)}}\|\nabla v\|_2
  \le C_{d,N}g\,\|A\|_{L^d}
     \|\nabla u\|_2\|\nabla v\|_2.
  \label{eq:form-bound-holder}
\end{equation}
The estimate makes $b_A$ continuous on the form domain, but closure of $q_A$ does not require the coefficient in \eqref{eq:form-bound-holder} to be smaller than one. For $R>0$, decompose $A=A_{\le R}+A_{>R}$ according to $|A|\le R$ and $|A|>R$. The tail satisfies $\|A_{>R}\|_d\to0$ as $R\to\infty$, while the bounded part obeys
\begin{equation}
  |b_{A_{\le R}}[u,u]|
  \le C_N gR\,\|u\|_2\|\nabla u\|_2
  \le \varepsilon\|\nabla u\|_2^2+C_{\varepsilon,R,A}\|u\|_2^2.
  \label{eq:bounded-part-form}
\end{equation}
Choosing $R$ so that the coefficient of the tail is smaller than $\varepsilon$ yields
\begin{equation}
  |b_A[u,u]|
  \le 2\varepsilon h_0[u]+C_{\varepsilon,A}\|u\|_2^2,
  \qquad u\in\mathcal Q.
  \label{eq:infinitesimal-form-bound}
\end{equation}
Thus $b_A$ is infinitesimally form-bounded with respect to $h_0$. The KLMN theorem and the first representation theorem for closed semibounded forms then make $q_A$ closed and lower semibounded on $\mathcal Q$ and associate with it a unique self-adjoint operator \cite{Kato1995}. Its domain is
\begin{equation}
  D(\Mfp_A)
  =\left\{u\in\mathcal Q:\ \exists f\in\mathcal H_\perp\ \text{such that}\
  q_A[u,v]=\langle f,v\rangle\ \text{for every }v\in\mathcal Q\right\},
  \qquad \Mfp_Au=f.
  \label{eq:fp-form-domain}
\end{equation}
This definition remains meaningful when the first-order term in \eqref{eq:fp-landau} is not an $L^2$ operator. The compact embedding $\mathcal Q\hookrightarrow\mathcal H_\perp$ implies that $\Mfp_A$ has compact resolvent.

Symmetry is obtained without integrating by parts against a rough coefficient. Let $A^{(m)}$ be periodic mollifications of $A$. They remain transverse, converge to $A$ in $L^d$, and satisfy
\begin{equation}
  b_{A^{(m)}}[u,v]=\overline{b_{A^{(m)}}[v,u]}
  \label{eq:smooth-form-symmetry}
\end{equation}
for smooth $u,v$. The bound \eqref{eq:form-bound-holder} extends this identity first to $\mathcal Q$ for each $m$ and then to $A$ by $L^d$ convergence. Hence $q_A$ is symmetric. The same estimate applied to \eqref{eq:k-form-definition}, together with \eqref{eq:lambda-isometry}, gives
\begin{equation}
  \|\KA\|
  \le C_{d,N}g\,\|A\|_{L^d},
  \qquad d\ge3,
  \label{eq:k-norm-bound}
\end{equation}
and the symmetry of $b_A$ makes $\KA$ self-adjoint. A sufficient, though nonsharp, condition for $A$ to lie inside the first Gribov region is therefore $C_{d,N}g\|A\|_{L^d}<1$.

For $d=2$ the Sobolev exponent in \eqref{eq:form-bound-holder} is unavailable. The solvable backgrounds used below are smooth, so the form realization and self-adjointness follow directly from bounded coefficients. At the two-dimensional Orlicz endpoint, the argument concerns compactness and singular-value control of the normalized operator, while the smooth periodic sector admits the differential realization directly from bounded coefficients.

On a smooth transverse core, $\KA$ factors through the Riesz transforms $\Riesz_i=\partial_i\Lam$ and the matrix multiplier $\mathcal A_i=-g\ad(A_i)$,
\begin{equation}
  \KA=\sum_{i=1}^{d}\Lam\,\mathcal A_i\,\Riesz_i.
  \label{eq:k-lambda-riesz}
\end{equation}
For $A\in L^d$, the factorization is first used for smooth transverse approximants and then passed to the form-defined limit \eqref{eq:k-form-definition}.

\section{Cwikel-type estimates for periodic matrix-valued operators}
\label{sec:cwikel}

We use $\mathcal S_{p,\infty}$ for the weak Schatten class with quasi-norm
\begin{equation}
  \|T\|_{\mathcal S_{p,\infty}}:=\sup_{n\ge1} n^{1/p}s_n(T).
  \label{eq:weak-schatten-quasinorm}
\end{equation}

Extend $\Lam$ to the full scalar $L^2(T^d)$ by assigning the Fourier symbol zero at $p=0$ and $|p|^{-1}$ for $p\ne0$. The ordinary torus is the commutative specialization of the torus setting treated by McDonald and Ponge, whose Cwikel estimate gives, for $d>2$,
\begin{equation}
  \|M_f\Lam\|_{\mathcal S_{d,\infty}}
  \le C_d\|f\|_{L^d(T^d)}.
  \label{eq:scalar-cwikel}
\end{equation}
The zero Fourier mode is already absent from the multiplier, so no Euclidean-to-periodic transfer argument is needed \cite{McDonaldPonge2022}. The hypothesis on the Fourier multiplier is consistent with the elementary counting estimate
\begin{equation}
  \#\{p\in\mathbb Z^d\setminus\{0\}:|p|^{-1}>t\}
  \le C_dt^{-d},
  \label{eq:weak-ell-d}
\end{equation}
which places $|p|^{-1}$ in $\ell^{d,\infty}$.

The color extension is finite dimensional and can be separated from the scalar estimate. If $F(x)$ is an $r\times r$ matrix field and $T=M_F\Lam$, then
\begin{equation}
  F(x)^*F(x)\preceq |F(x)|_{\mathrm{op}}^2\,I_r
  \label{eq:matrix-pointwise-order}
\end{equation}
implies
\begin{equation}
  T^*T\preceq
  \big(\Lam M_{|F|_{\mathrm{op}}^2}\Lam\big)\otimes I_r.
  \label{eq:matrix-operator-order}
\end{equation}
Weyl monotonicity then gives
\begin{equation}
  s_n(M_F\Lam)
  \le s_{\lceil n/r\rceil}(M_{|F|_{\mathrm{op}}}\Lam),
  \qquad
  \|M_F\Lam\|_{\mathcal S_{d,\infty}}
  \le r^{1/d}C_d\||F|_{\mathrm{op}}\|_{L^d}.
  \label{eq:matrix-to-scalar}
\end{equation}
Using \eqref{eq:k-lambda-riesz}, invariance of singular values under adjunction, $\|\Riesz_i\|\le1$, and the weak-Schatten ideal property yields the desired estimate.

\begin{theorem}[Periodic matrix-valued Cwikel bound]
\label{thm:cwikel}
Let $d\ge3$ and let $A\in L^d(T^d;\mathfrak{su}(N)^d)$ be transverse in the distributional sense. Then the form-defined operator $\KA$ belongs to $\mathcal S_{d,\infty}$ and
\begin{equation}
  s_n(\KA)
  \le C_{d,N}\,g\,\|A\|_{L^d}\,n^{-1/d},
  \qquad n\ge1.
  \label{eq:cwikel-bound}
\end{equation}
In particular, $\KA$ is compact and the spectral criterion \eqref{eq:bs-criterion} applies to the self-adjoint realization $\Mfp_A$ defined by the form $q_A$.
\end{theorem}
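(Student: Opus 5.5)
The proof starts from the factorization \eqref{eq:k-lambda-riesz}, first for smooth transverse approximants, and then passes to the form-defined limit.

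\emph{Smooth case.} Let $A$ be smooth and transverse, and write $\mathcal A_i=-g\ad(A_i)$. Then
\begin{equation}
  \KA=\sum_{i=1}^d \Lam\,\mathcal A_i\,\Riesz_i=\sum_{i=1}^d (M_{\mathcal A_i^*}\Lam)^*\Riesz_i .
\end{equation}
Here $\Lam$ is self-adjoint and is extended by zero at $p=0$. Singular values are invariant under adjunction, so the matrix-to-scalar reduction \eqref{eq:matrix-to-scalar}, applied with $F=\mathcal A_i^*$, gives
\begin{equation}
  \|\Lam\mathcal A_i\|_{\mathcal S_{d,\infty}}\le r^{1/d}C_d\,\||\mathcal A_i|_{\mathrm{op}}\|_{L^d}.
\end{equation}
The structure constants give the pointwise bound $|\ad(A_i)(x)|_{\mathrm{op}}\le C_N|A_i(x)|$, and therefore
\begin{equation}
  \|\Lam\mathcal A_i\|_{\mathcal S_{d,\infty}}\le C_{d,N}\,g\,\|A\|_{L^d}.
\end{equation}
Right multiplication by the contraction $\Riesz_i$ does not increase singular values. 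Next I sum over $i$. Since $\mathcal S_{d,\infty}$ is only quasi-normed, I use the Ky Fan inequality $s_{m+n-1}(S+T)\le s_m(S)+s_n(T)$ iterated over the $d$ summands. This gives $s_{dn-d+1}(\KA)\le d\,C\,g\|A\|_{L^d}n^{-1/d}$, and reindexing turns it into \eqref{eq:cwikel-bound} with a larger constant $C_{d,N}$. One point needs checking: $\KA$ acts on $\mathcal H_\perp$, while the Cwikel bound \eqref{eq:scalar-cwikel} is stated on the full space. Compressing by $P$ does not increase singular values, and $\Lam$ already annihilates constants, so this causes no loss.

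\emph{Passage to $A\in L^d$.} Take periodic mollifications $A^{(m)}\to A$ in $L^d$; they stay transverse. Since $A\mapsto b_A$ is linear, $\KA-K_{A^{(m)}}=K_{A-A^{(m)}}$. The norm bound \eqref{eq:k-norm-bound} then gives $\|K_{A^{(m)}}-\KA\|\to0$, where $\KA$ is the operator defined by the form \eqref{eq:k-form-definition}. Singular values are $1$-Lipschitz in operator norm, $|s_n(S)-s_n(T)|\le\|S-T\|$. Hence $s_n(\KA)=\lim_m s_n(K_{A^{(m)}})\le C_{d,N}g\lim_m\|A^{(m)}\|_{L^d}n^{-1/d}=C_{d,N}g\|A\|_{L^d}n^{-1/d}$. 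This is the step to handle with the most care. The constant must be uniform in $m$, which it is because it depends only on $d$ and $N$. The limit must also be identified with the form-defined $\KA$ and not with some other closure of the factorization, and the norm bound guarantees this.

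\emph{Consequences.} Since $s_n(\KA)\to0$, $\KA$ is compact. Section~\ref{sec:self-adjointness-scale} shows that $q_A$ is closed, lower semibounded and symmetric, and that $\KA$ is bounded and self-adjoint. Hence all hypotheses of Proposition~\ref{prop:bs-criterion} hold, and the criterion \eqref{eq:bs-criterion} follows for the form realization $\Mfp_A$.
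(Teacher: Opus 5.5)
Your proposal is correct and follows the same route as the paper. You use the Riesz factorization \eqref{eq:k-lambda-riesz} for smooth transverse backgrounds, the matrix-to-scalar reduction \eqref{eq:matrix-to-scalar} together with the torus Cwikel bound and $|\ad(A_i)|_{\mathrm{op}}\le C_N|A_i|$, and then mollification to reach general transverse $A\in L^d$. Your Ky Fan summation over $i$ and your limit step via \eqref{eq:k-norm-bound} and the Lipschitz continuity of $s_n$ are more explicit versions of the paper's appeal to the quasi-normed ideal property and to Cauchy convergence in $\mathcal S_{d,\infty}$.
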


\begin{proof}
For smooth transverse $A$, write $\mathcal A_i=-g\ad(A_i)$ and use \eqref{eq:k-lambda-riesz}. The estimate \eqref{eq:matrix-to-scalar}, the scalar torus estimate \eqref{eq:scalar-cwikel}, $\|\Riesz_i\|\le1$, and $|\ad(A_i)|_{\mathrm{op}}\le\kappa_N|A_i|$ give
\begin{equation}
  \|\KA\|_{\mathcal S_{d,\infty}}
  \le C_{d,N}g\|A\|_{L^d}.
  \label{eq:cwikel-quasinorm}
\end{equation}
For a general transverse $A\in L^d$, take smooth transverse mollifications $A^{(m)}\to A$ in $L^d$. Equation \eqref{eq:cwikel-quasinorm} makes $K_{A^{(m)}}$ Cauchy in $\mathcal S_{d,\infty}$ and therefore in operator norm. Its limit agrees with the form-defined operator \eqref{eq:k-form-definition} by \eqref{eq:form-bound-holder}. Passing to the limit proves the claim.
\end{proof}

\subsection{The two-dimensional endpoint}
\label{sec:cwikel-endpoint}

The scalar endpoint must retain the mean-zero projection explicitly. Let $P_0$ be the projection from scalar $L^2(T^2)$ onto $L^2_0(T^2)$, let $H_0=-\Delta|_{L^2_0}$, and let $\Lam=H_0^{-1/2}$. For $F$ scalar, $V=|F|^2$, and $T=M_F\Lam$, one has
\begin{equation}
  T^*T=\Lam P_0M_VP_0\Lam.
  \label{eq:tstar-t-projected}
\end{equation}
The Birman-Schwinger principle therefore gives the exact counting identity
\begin{equation}
  \#\{n:s_n(M_F\Lam)>\varepsilon\}
  =N_-\!\left(H_0-\varepsilon^{-2}P_0M_VP_0\right).
  \label{eq:bs-clr}
\end{equation}
The compressed and unprojected operators have different negative-eigenvalue counts; they are related instead by the codimension-one estimate
\begin{equation}
  N_-(-\Delta-\varepsilon^{-2}V)-1
  \le
  N_-\!\left(H_0-\varepsilon^{-2}P_0M_VP_0\right)
  \le
  N_-(-\Delta-\varepsilon^{-2}V).
  \label{eq:compressed-count-interlacing}
\end{equation}
Indeed, the intersection of any $m$-dimensional negative subspace of the unprojected quadratic form with $L^2_0$ has dimension at least $m-1$, while every negative subspace of the compressed form is also negative for the full form. Consequently, the compressed problem loses at most one negative direction relative to the unprojected counterexample.

Take
\begin{equation}
  F_L(x)=\sqrt{\frac{\beta}{L}}\frac{1}{|x|}
  \mathbf 1_{\{e^{-L}<|x|<1\}},
  \qquad
  V_L=|F_L|^2,
  \label{eq:counterexample}
\end{equation}
with the annulus contained in a coordinate disk. Its norm remains fixed,
\begin{equation}
  \|F_L\|_2^2=2\pi\beta.
  \label{eq:counterexample-norm}
\end{equation}
Restrict the unprojected quadratic form to radial test functions supported in the annulus and use $t=\log(1/r)\in(0,L)$. The form becomes
\begin{equation}
  2\pi\int_0^L
  \left(|\varphi'(t)|^2-\frac{\beta}{L}|\varphi(t)|^2\right)\dd t.
  \label{eq:radial-form-log}
\end{equation}
The Dirichlet modes $\sin(n\pi t/L)$ are negative whenever $n<\sqrt{\beta L}/\pi$. If
\begin{equation}
  m_L=\#\left\{n\in\mathbb N:\ n<\frac{\sqrt{\beta L}}{\pi}\right\},
  \label{eq:ml-definition}
\end{equation}
then the variational principle gives
\begin{equation}
  N_-(-\Delta-V_L)\ge m_L.
  \label{eq:unprojected-negative-growth}
\end{equation}
Equations \eqref{eq:bs-clr} and \eqref{eq:compressed-count-interlacing} imply
\begin{equation}
  \#\{n:s_n(M_{F_L}\Lam)>1\}
  \ge m_L-1,
  \label{eq:projected-negative-growth}
\end{equation}
and $m_L=\sqrt{\beta L}/\pi+O(1)$. By \eqref{eq:weak-schatten-quasinorm},
\begin{equation}
  \|M_{F_L}\Lam\|_{\mathcal S_{2,\infty}}\ge (m_L-1)^{1/2},
  \label{eq:counterexample-quasinorm-lower}
\end{equation}
while $\|F_L\|_2=(2\pi\beta)^{1/2}$ is independent of $L$. Therefore
\begin{equation}
  \frac{\|M_{F_L}\Lam\|_{\mathcal S_{2,\infty}}}{\|F_L\|_2}
  \longrightarrow\infty.
  \label{eq:counterexample-blowup}
\end{equation}
Thus no uniform $L^2(T^2)\to\mathcal S_{2,\infty}$ bound exists even after the constant mode is removed.

The positive endpoint on the torus is described by an Orlicz condition. Let
\begin{equation}
  M(t)=t\log(e+t),
  \qquad
  \Phi(s)=M(s^2)=s^2\log(e+s^2).
  \label{eq:orlicz-functions}
\end{equation}
Sukochev and Zanin proved the periodic estimate
\begin{equation}
  \left\|(1-\Delta)^{-1/2}M_V(1-\Delta)^{-1/2}\right\|_{\mathcal S_{1,\infty}}
  \le C\|V\|_{L\log L(T^2)}
  \label{eq:sukochev-zanin-endpoint}
\end{equation}
for $V\in L\log L(T^2)$ \cite{SukochevZanin2022}. If $J=(1-\Delta)^{-1/2}$ and $B$ is the bounded Fourier multiplier
\begin{equation}
  b(p)=
  \begin{cases}
    \sqrt{1+|p|^2}/|p|,&p\ne0,\\
    0,&p=0,
  \end{cases}
  \qquad \|B\|\le\sqrt2,
  \label{eq:b-multiplier}
\end{equation}
then the mean-zero inverse square root extended by zero at $p=0$ satisfies $\Lam=BJ$. Since
\begin{equation}
  (M_f\Lam)^*(M_f\Lam)=\Lam M_{|f|^2}\Lam
  =BJM_{|f|^2}JB,
  \label{eq:orlicz-tstar-t}
\end{equation}
the ideal property, \eqref{eq:sukochev-zanin-endpoint}, and $s_n(T^*T)=s_n(T)^2$ give
\begin{equation}
  \|M_f\Lam\|_{\mathcal S_{2,\infty}}
  \le C\|f\|_{L_\Phi(T^2)}.
  \label{eq:orlicz-endpoint}
\end{equation}
Here $L_\Phi$ denotes the Luxemburg Orlicz space associated with $\Phi$. The matrix reduction \eqref{eq:matrix-to-scalar} gives the corresponding finite-color estimate. For smooth transverse backgrounds this combines with the Riesz factorization to yield $\KA\in\mathcal S_{2,\infty}$. The endpoint estimate controls the singular values of the normalized operator and does not require a differential realization of $\Mfp_A$ at the same Orlicz regularity.

\begin{proposition}[Two-dimensional periodic endpoint]
\label{prop:endpoint}
Let $f\in L_\Phi(T^2)$ with $\Phi(s)=s^2\log(e+s^2)$. Then
\begin{equation}
  M_f\Lam\in\mathcal S_{2,\infty},
  \qquad
  \|M_f\Lam\|_{\mathcal S_{2,\infty}}
  \le C\|f\|_{L_\Phi}.
  \label{eq:endpoint-proposition}
\end{equation}
Consequently, every smooth transverse two-dimensional background in the periodic sector considered below produces a compact self-adjoint $\KA\in\mathcal S_{2,\infty}$.
\end{proposition}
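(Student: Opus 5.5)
The estimate \eqref{eq:endpoint-proposition} will follow from the Sukochev--Zanin bound \eqref{eq:sukochev-zanin-endpoint} by the factorization $\Lam=BJ$. The plan is first to check that $f\in L_\Phi$ gives $V=|f|^2\in L\log L(T^2)$ with a quantitative norm comparison. Since $\Phi(s)=M(s^2)$, the Luxemburg functionals match: $\int\Phi(|f|/\lambda)\le1$ is the same as $\int M(|f|^2/\lambda^2)\le1$. Hence $\|V\|_{L\log L}=\|f\|_{L_\Phi}^2$ when $L\log L$ carries the Luxemburg norm for $M$. With the other standard norms on $L\log L$, this holds up to a constant, and I would fix the Luxemburg normalization to avoid that issue.

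Next I would write $T=M_f\Lam$ and use \eqref{eq:orlicz-tstar-t}, so $T^*T=B\,(JM_VJ)\,B$. The ideal property of $\mathcal S_{1,\infty}$ together with $\|B\|\le\sqrt2$ from \eqref{eq:b-multiplier} gives
\begin{equation*}
\|T^*T\|_{\mathcal S_{1,\infty}}\le 2\,\|JM_VJ\|_{\mathcal S_{1,\infty}}\le 2C\|V\|_{L\log L}\le 2C\|f\|_{L_\Phi}^2 .
\end{equation*}
Since $s_n(T)=s_n(T^*T)^{1/2}$, we get $n^{1/2}s_n(T)=(n\,s_n(T^*T))^{1/2}$. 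Taking suprema yields $\|T\|_{\mathcal S_{2,\infty}}\le(2C)^{1/2}\|f\|_{L_\Phi}$, which is \eqref{eq:endpoint-proposition}. One technical point needs care here. $T$ is a priori only densely defined, so I would first take $f$ bounded, where $T$ is bounded and the identity is exact. I would then approximate a general $f$ by truncations $f_R$: the bound shows that $M_{f_R}\Lam$ is Cauchy in operator norm and in the quasi-norm, and the limit is the closure of $M_f\Lam$.

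For the consequence, let $A$ be smooth and transverse on $T^2$. Then $\KA=\sum_i\Lam\mathcal A_i\Riesz_i$ by \eqref{eq:k-lambda-riesz}, and each $\Lam\mathcal A_i=(M_{\mathcal A_i^*}\Lam)^*$ restricted to $\mathcal H_\perp$. The matrix reduction \eqref{eq:matrix-to-scalar} applies verbatim in $d=2$, with the scalar input now given by \eqref{eq:endpoint-proposition} in place of \eqref{eq:scalar-cwikel}. This gives $\|M_{\mathcal A_i^*}\Lam\|_{\mathcal S_{2,\infty}}\le r^{1/2}C\,\||\mathcal A_i|_{\mathrm{op}}\|_{L_\Phi}$, which is finite because bounded functions lie in $L_\Phi$. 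Adjunction preserves singular values, and $\|\Riesz_i\|\le1$. The quasi-triangle inequality then handles the finite sum over $i$ with a constant, so $\KA\in\mathcal S_{2,\infty}$ and in particular $\KA$ is compact. Self-adjointness needs no new Sobolev input. For smooth $A$ the form $b_A$ is bounded on $\mathcal Q$ by $\|A\|_\infty\|u\|_2\|\nabla v\|_2$, so $\KA$ is bounded. Transversality gives the Hermitian symmetric kernel \eqref{eq:m-kernel-symmetric}, equivalently the symmetry of $b_A$, and hence $\KA$ is self-adjoint.

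I expect the main obstacle to be the norm bookkeeping between $L_\Phi$ and the $L\log L$ norm used by Sukochev and Zanin. Their space may use a different but equivalent norm, such as one built from $\int|V|\log(e+|V|/\|V\|_1)$, and the argument needs only equivalence up to constants, not equality. I would handle this by invoking the equivalence of Orlicz norms for the same Young function, or for functions equivalent at infinity, on a finite measure space. The rest of the argument is a routine application of the ideal property and the $T^*T$ identity.
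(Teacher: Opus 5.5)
Your proposal is correct and follows the paper's own argument. Both use the factorization $\Lam=BJ$, the identity $T^*T=BJM_{|f|^2}JB$, the ideal property with $\|B\|\le\sqrt2$, the Sukochev--Zanin bound and $s_n(T^*T)=s_n(T)^2$, and then the matrix reduction with the Riesz factorization for the smooth-background consequence. Your extra steps only make explicit what the paper leaves implicit: the Luxemburg identity $\|\,|f|^2\|_{L_M}=\|f\|_{L_\Phi}^2$, the truncation argument (whose Cauchy property needs $f_R\to f$ in $L_\Phi$, which holds because $\Phi$ satisfies the $\Delta_2$ condition), and the explicit symmetry of $b_A$.
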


No uniform bound of the form $L^{2,1}\to\mathcal S_{2,\infty}$ holds. Let $f_r=r^{-1}\mathbf1_{B_r}$ on a small disk, for which $\|f_r\|_{L^{2,1}}$ is independent of $r$, and define the radial cutoff
\begin{equation}
  u_r(\rho)=
  \begin{cases}
    1,&0\le\rho\le r,\\
    \dfrac{\log(1/\rho)}{\log(1/r)},&r<\rho<1,\\
    0,&\rho\ge1.
  \end{cases}
  \label{eq:logarithmic-cutoff}
\end{equation}
It satisfies $\|\nabla u_r\|_2^2=2\pi/\log(1/r)$ and $\overline{u_r}=O(1/\log(1/r))$. Set $v_r=u_r-\overline{u_r}$ and $\phi_r=H_0^{1/2}v_r$. For sufficiently small $r$, $|v_r|\ge1/2$ on $B_r$, so
\begin{equation}
  \|\phi_r\|_2^2=\frac{2\pi}{\log(1/r)},
  \qquad
  \|M_{f_r}\Lam\phi_r\|_2^2
  =\|f_rv_r\|_2^2\ge\frac{\pi}{4}.
  \label{eq:lorentz-test-vector}
\end{equation}
Consequently,
\begin{equation}
  \|M_{f_r}\Lam\|
  \ge\sqrt{\frac{\log(1/r)}{8}},
  \label{eq:lorentz-counterexample}
\end{equation}
while $\|f_r\|_{L^{2,1}}$ stays bounded. The Orlicz norm in \eqref{eq:orlicz-endpoint} grows at the same logarithmic scale and therefore records the concentration missed by $L^{2,1}$.

\section{The semiclassical crossing law}
\label{sec:crossing-law}

Because $\VA$ is linear in the background and the free factors $\Hzero^{\pm1/2}$ do not depend on $A$, the amplitude rescaling $A\mapsto\lambda A$ propagates directly to $K_{\lambda A}=\lambda\KA$. Writing $\mu_n$ for the negative eigenvalues of $\KA$, an eigenvalue of $K_{\lambda A}$ reaches the horizon value $-1$ when $\lambda\mu_n=-1$, so the number of crossings accumulated up to amplitude $\lambda$ is
\begin{equation}
  N_-(\lambda)=\#\{n:\mu_n\le -\lambda^{-1}\}.
  \label{eq:n-crossing-definition}
\end{equation}
As $\lambda\to\infty$ the threshold $-\lambda^{-1}$ approaches the origin from below, so \eqref{eq:n-crossing-definition} is controlled by the accumulation of the negative eigenvalues of $\KA$ at $0^-$, which is governed by the principal symbol of the compact operator of order $-1$ and the corresponding Weyl law for negative-order pseudodifferential operators \cite{Ponge2023}.

To leading order in the order-$(-1)$ calculus the symbols of $\Hzero^{-1/2}$ and $\VA$ multiply, giving $k^{ab}(x,\xi)=-ig f^{acb}A_i^c(x)\xi_i/|\xi|^2$. For $SU(2)$ the structure constants are $f^{acb}=\epsilon^{acb}$, and introducing the color vector $v^c(x,\hat\xi)=gA_i^c(x)\hat\xi_i$ one writes the color block as $k^{ab}=(i/|\xi|)\Sigma^{ab}$ with $\Sigma^{ab}=\epsilon^{abc}v^c$ real and antisymmetric. Contracting two Levi-Civita symbols gives $(\Sigma^2)^{ab}=v^av^b-|v|^2\delta^{ab}$, and since $(\Sigma v)^a=0$ one has $\Sigma^3=-|v|^2\Sigma$, so the eigenvalues of $\Sigma$ are $\{0,\pm i|v|\}$ and those of the symbol are
\begin{equation}
  \lambda_{\mathrm{sym}}(x,\xi)\in\Big\{0,\ \tfrac{|v(x,\hat\xi)|}{|\xi|},\ -\tfrac{|v(x,\hat\xi)|}{|\xi|}\Big\}.
  \label{eq:symbol-eigenvalues}
\end{equation}
Only the negative branch contributes to \eqref{eq:n-crossing-definition}, and the semiclassical counting function reads
\begin{equation}
  n_-(\varepsilon)\sim\frac{1}{(2\pi)^d}\int\dd^dx\int\dd^d\xi\,\mathbf 1\!\left[\lambda_{\mathrm{sym}}^-(x,\xi)<-\varepsilon\right].
  \label{eq:weyl-phase-space}
\end{equation}
The condition $-|v|/|\xi|<-\varepsilon$ is equivalent to $|\xi|<|v(x,\hat\xi)|/\varepsilon$, so integrating the radius in polar coordinates leaves the angular integral of $|v|^d$,
\begin{equation}
  n_-(\varepsilon)\sim\frac{\varepsilon^{-d}}{d(2\pi)^d}\int\dd^dx\int_{S^{d-1}}\Big(\sum_c[gA_i^c(x)\hat\xi_i]^2\Big)^{d/2}\dd\hat\xi.
  \label{eq:weyl-count-general}
\end{equation}
In two dimensions the angular average is elementary, $\int_{S^1}\hat\xi_i\hat\xi_j\,\dd\hat\xi=\pi\delta_{ij}$, so that $\int_{S^1}|v|^2\dd\hat\xi=\pi|gA|^2$ with $|gA|^2=g^2\sum_{c,i}(A_i^c)^2$, and the constant collapses to $1/(8\pi)$,
\begin{equation}
  n_-(\varepsilon)\sim\frac{1}{8\pi}\,\varepsilon^{-2}\int|gA(x)|^2\,\dd^2x.
  \label{eq:weyl-d2-final}
\end{equation}
Setting $\varepsilon=\lambda^{-1}$ identifies $N_-(\lambda)=n_-(\lambda^{-1})$, whence $N_-(\lambda)\sim(\lambda^2/8\pi)\int|gA|^2$. The singular-value count sees both nonzero branches $\pm|v|/|\xi|$ and therefore carries twice this coefficient, $\#\{n:s_n(\lambda\KA)>1\}\sim(\lambda^2/4\pi)\int|gA|^2$. The factor of two records the passage from the signed negative branch to the unsigned singular values.

For the periodic background \eqref{eq:periodic-background}, the semiclassical
asymptotic can also be tested directly by diagonalizing finite Fourier
truncations of $\KA$ and counting the negative eigenvalues below
$-\varepsilon$. \Cref{fig:semiclassical-counting} compares this counting
function with the prediction \eqref{eq:weyl-d2-final}. In the window
$0.03\le\varepsilon\le0.12$, a log-log fit gives a slope $-2.0077$, close
to the predicted value $-2$. Over the same window, the compensated ratio
$n_-(\varepsilon)\varepsilon^2/\int |gA|^2$ remains close to $1/(8\pi)$,
and the eigenvalue counts are unchanged for Fourier cutoffs
$N=40,60,80$.

\begin{figure}[t]
  \centering
  \includegraphics[width=0.97\linewidth]{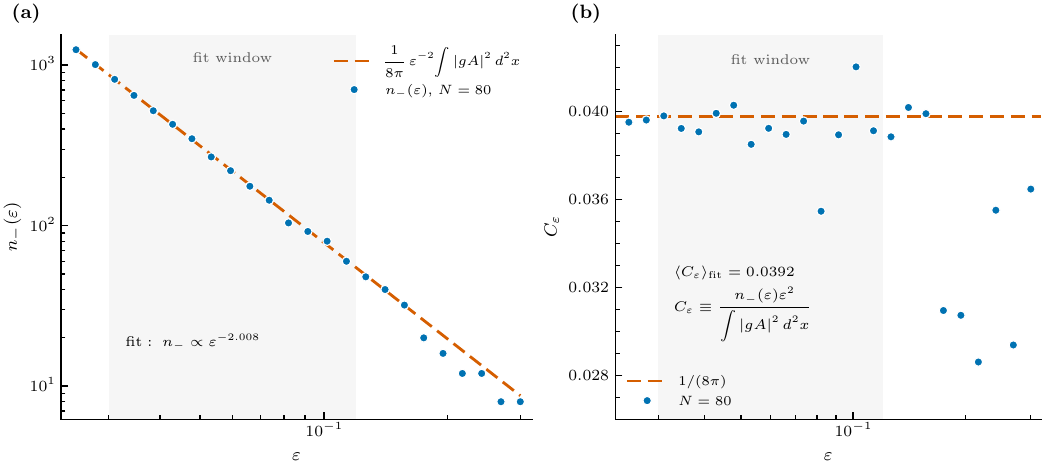}
  \caption{Numerical test of the semiclassical counting law for the periodic
  $SU(2)$ background \eqref{eq:periodic-background} with $g=a=Q=1$.
  (a) Negative-eigenvalue counting function $n_-(\varepsilon)$ obtained from
  the truncated Birman-Schwinger operator at $N=80$, compared with the
  parameter-free asymptotic prediction
  $(8\pi)^{-1}\varepsilon^{-2}\int |gA|^2$.
  (b) Compensated ratio
  $n_-(\varepsilon)\varepsilon^2/\int |gA|^2$, compared with $1/(8\pi)$.
  The shaded region, $0.03\le\varepsilon\le0.12$, is the window used for the
  log-log fit. The eigenvalue counts in this window are identical for
  $N=40$, $60$, and $80$.}
  \label{fig:semiclassical-counting}
\end{figure}

Equation \eqref{eq:weyl-d2-final} is a principal-symbol asymptotic for smooth backgrounds. Extension of this Weyl law to coefficients with only $L^2$ regularity is excluded by the counterexample in \cref{sec:cwikel-endpoint}.

\section{The ghost propagator as a diagonal resolvent}
\label{sec:ghost-resolvent}

At a fixed transverse background $A$ the inverse Faddeev-Popov operator $\Mfp_A^{-1}$ is the ghost Green operator. The quantum ghost propagator follows only after the functional average over the gauge field. Inverting the congruence gives $\Mfp_A^{-1}=\Hzero^{-1/2}(1+\KA)^{-1}\Hzero^{-1/2}$, an identity of bounded operators on $\mathcal H_\perp$ throughout the interior of the Gribov region, where $1+\KA$ is invertible. A plane-wave adjoint state is an eigenstate of $\Hzero^{-1/2}$, so sandwiching the inverse between two such states extracts the free factor cleanly,
\begin{equation}
  \langle k|\Mfp_A^{-1}|k\rangle=\frac{1}{k^2}\,\langle k|(1+\KA)^{-1}|k\rangle,
  \label{eq:ghost-resolvent}
\end{equation}
and the ghost dressing function, defined by stripping the free propagator $1/k^2$, is the diagonal resolvent of the Birman-Schwinger operator at the spectral point $-1$. In the spectral decomposition $\KA=\sum_n\mu_n\ket{\eta_n}\bra{\eta_n}$,
\begin{equation}
  d(k)=\langle k|(1+\KA)^{-1}|k\rangle=\sum_n\frac{|\langle k|\eta_n\rangle|^2}{1+\mu_n},
  \label{eq:dressing-spectral}
\end{equation}
so each mode contributes a positive weight divided by its spectral distance $1+\mu_n$ from the horizon. For a fixed background, a negative eigenvalue near $-1$ can amplify a diagonal ghost matrix element only to the extent that the chosen momentum-color probe overlaps the corresponding eigenvector.

\subsection{Meromorphy in the amplitude and the horizon pole}
\label{sec:amplitude-pole}

The amplitude scaling makes the analytic dependence in \eqref{eq:dressing-spectral} explicit. Since $K_{aA}=a\KA$ is a scalar multiple of a fixed operator, its eigenvectors $\eta_n$ do not depend on the amplitude and only the eigenvalues scale, so the dressing is meromorphic in $a$,
\begin{equation}
  d(k;a)=\sum_n\frac{|\langle k|\eta_n\rangle|^2}{1+a\mu_n},
  \label{eq:dressing-meromorphic}
\end{equation}
with simple poles at $a=-1/\mu_n$ and residues determined by the corresponding eigenvector overlaps. This is the coupling-parameter form of the Birman-Schwinger spectral representation; in the present setting, linearity in the background amplitude keeps the eigenvectors fixed and makes the approach to the horizon directly visible through these poles. Writing $\kappa_{\min}<0$ for the lowest eigenvalue of the unit-amplitude operator, the first horizon crossing occurs at $\acrit=1/|\kappa_{\min}|$, and isolating the critical term gives
\begin{equation}
  d(k;a)=\frac{|\langle k|\eta_{\min}\rangle|^2}{1-a/\acrit}+\sum_{n\neq\min}\frac{|\langle k|\eta_n\rangle|^2}{1+a\mu_n},
  \label{eq:ghost-pole}
\end{equation}
with residue $\operatorname{Res}_{a=\acrit}d(k;a)=-\acrit|\langle k|\eta_{\min}\rangle|^2$, the spectral overlap between the probe momentum and the critical near-zero mode. The resolvent becomes singular as an operator at the horizon, whereas the diagonal element $d(k)$ registers the crossing only when the probe overlaps $\eta_{\min}$. A momentum orthogonal to the critical mode by a residual symmetry of the background keeps $d(k)$ finite while the operator norm of $(1+K_{aA})^{-1}$ already diverges, and when the crossing is degenerate the residue becomes the summed overlap over the critical eigenspace, so degeneracies of the Faddeev-Popov kernel enhance the ghost through the multiplicity of that subspace.

\Cref{fig:ghost_pole} realizes these statements for the periodic $Q=m=1$ reference case, where $\kappa_{\min}\approx-0.5248$ and $\acrit\approx1.9053$. At fixed transverse momentum $m=1$, the diagonal resolvents at chain indices $n=0$ and $n=1$, corresponding to the full momenta $p=(0,1)$ and $p=(1,1)$, rapidly approach the single-pole law $d\simeq R(n)/(1-a/\acrit)$. The $n=2$ probe, $p=(2,1)$, requires a closer approach to the horizon before the pole dominates because its critical overlap is smaller and its regular contribution is comparatively larger. A fit of $d(n;a)=R(n)/(1-a/\acrit)+C_n$ over the eight points closest to the horizon gives residues that agree with $|\langle n|\eta_{\min}\rangle|^2$ to better than $2\times10^{-9}$ relative precision for all three displayed probes.

\begin{figure}[H]
  \centering
  \includegraphics[width=0.85\linewidth]{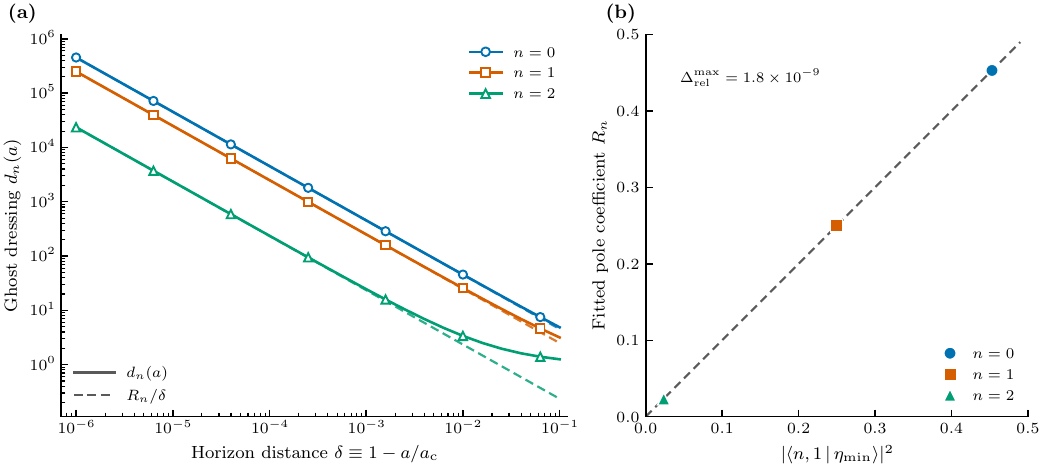}
  \caption{Ghost dressing as the diagonal resolvent of $\KA$ for the periodic $SU(2)$ background at $Q=m=1$. The labels are the chain index $n$ at fixed $m=1$, so $n=0$ denotes the nonzero full momentum $p=(0,1)$ and is not the removed constant ghost mode. Left: $d(n;a)=\langle n,m=1|(1+a\KA)^{-1}|n,m=1\rangle$ as a function of the horizon distance, with dashed curves showing the isolated-pole contribution. Right: fitted pole coefficients $R(n)$ against the critical-state overlaps; the dashed line is equality.}
  \label{fig:ghost_pole}
\end{figure}

\subsection{The quadratic Born term and Gribov's form factor}
\label{sec:gribov-recovery}

Gribov's no-pole construction writes the color-averaged ghost dressing in the form $1/(1-\sigma(k,A))$ and imposes $\sigma(0,A)<1$ as the restriction to $\Omega$ \cite{Gribov1978}. Define
\begin{equation}
  d_{\mathrm{av}}(k,A)=\frac{1}{r}\sum_{a=1}^{r}\langle k,a|(1+\KA)^{-1}|k,a\rangle.
  \label{eq:color-averaged-dressing}
\end{equation}
The term linear in the background vanishes under the color trace because the diagonal kernel of $\KA$ carries $f^{aca}=0$. Since $\KA$ is self-adjoint, the leading nonvanishing contribution is the non-negative quadratic term
\begin{equation}
  d_{\mathrm{av}}(k,A)=1+\frac{1}{r}\sum_{a=1}^{r}\|\KA\ket{k,a}\|^2+O(A^3).
  \label{eq:sigma-recovered}
\end{equation}

\begin{proposition}[Quadratic Born contribution]
\label{prop:gribov-recovery}
For a real transverse $SU(N)$ background with $A_i^c(0)=0$, the color-averaged second-order Born term is
\begin{equation}
\sigma_2(k,A)=\frac{g^2N}{N^2-1}
\sum_{\substack{q\\ k-q\ne0}}
\frac{k_i k_j\,A_i^c(q)A_j^c(-q)}
{k^2\,(k-q)^2}.   
  \label{eq:sigma-explicit}
\end{equation}
the leading nonvanishing contribution to Gribov's ghost form factor in the expansion in powers of the background field.
\end{proposition}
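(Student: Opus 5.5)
Expanding $(1+\KA)^{-1}=1-\KA+\KA^2-\cdots$ in the interior of the Gribov region (the Neumann series converges for small amplitude, and by the amplitude scaling $K_{aA}=a\KA$ each term is a homogeneous power of the background), we identify the second-order Born term with $\langle k,a|\KA^2|k,a\rangle$. Self-adjointness of $\KA$ gives $\langle k,a|\KA^2|k,a\rangle=\|\KA\ket{k,a}\|^2$, as in \eqref{eq:sigma-recovered}. The first-order term $\langle k,a|\KA|k,a\rangle$ is proportional to $f^{aca}A_i^c(0)$, which vanishes both by antisymmetry and by the hypothesis $A_i^c(0)=0$. So the task reduces to computing $\frac1r\sum_a\sum_{p\ne0}\sum_b|\KA^{ba}(p,k)|^2$ explicitly from the Fourier kernel \eqref{eq:k-kernel}.

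Inserting \eqref{eq:k-kernel} with $p=k-q$, the sum over the intermediate momentum $p\neq0$ becomes a sum over $q$ with $k-q\ne0$. This is exactly the constraint imposed in \eqref{eq:sigma-explicit}, and it comes from the projection $P$ that removes the constant ghost mode. We then obtain
\begin{equation*}
\sum_b|\KA^{ba}(k-q,k)|^2=\frac{g^2}{(k-q)^2k^2}\sum_b f^{bca}f^{bda}\,k_iA_i^c(q)\,\overline{k_jA_j^d(q)}.
\end{equation*}
Reality of the background gives $\overline{A_j^d(q)}=A_j^d(-q)$. Summing over $a$ and using the adjoint Casimir identity $f^{bca}f^{bda}=N\delta^{cd}$ collapses the color structure to $N\,k_ik_jA_i^c(q)A_j^c(-q)$. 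Dividing by $r=N^2-1$ produces the prefactor $g^2N/(N^2-1)$. The transversality identity $(p-k)_iA_i(p-k)=0$ lets us replace $k_i$ by the symmetric combination of \eqref{eq:m-kernel-symmetric} if desired, but the stated form with $k_ik_j$ is already what comes out of \eqref{eq:k-kernel}.

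The remaining step is to justify calling this the leading nonvanishing contribution. The linear term vanishes identically under the color trace. The quadratic term is non-negative and generically nonzero, and all higher terms are $O(A^3)$ in operator norm by \eqref{eq:k-norm-bound} or, for smooth backgrounds, by boundedness of $\KA$. Matching with Gribov's form $1/(1-\sigma)=1+\sigma+\dots$ identifies $\sigma_2=\frac1r\sum_a\|\KA\ket{k,a}\|^2$.

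The main obstacle is convention bookkeeping rather than analysis. The normalization of plane waves on $T^d$ (the $|T^d|$ factors), the sign and ordering conventions in $f^{acb}$ versus $f^{bca}$, and the convergence of the $q$-sum must all be handled consistently. For the last point, smoothness of $A$ gives rapid decay of $A(q)$, so the sum converges absolutely and the Born term is a genuine matrix element. We will verify the Casimir normalization $f^{acd}f^{bcd}=N\delta^{ab}$ against the structure-constant convention fixed for $SU(2)$, where $f=\epsilon$ gives $2\delta^{ab}$.
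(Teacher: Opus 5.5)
Your proposal is correct and follows essentially the same route as the paper. Both arguments compute the diagonal of $\KA^\dagger\KA$ (equivalently $\|\KA\ket{k,a}\|^2$) from the Fourier kernel \eqref{eq:k-kernel}, collapse the color sum with the adjoint Casimir identity, and use reality (the paper via the substitution $q=-\ell$) to arrive at \eqref{eq:sigma-explicit}. One small point: your intermediate display with $p=k-q$ should strictly contain $A(-q)$, not $A(q)$. The two expressions agree after relabeling $i\leftrightarrow j$, $c\leftrightarrow d$ and using reality, so the final result is unaffected.
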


\begin{proof}
Using the kernel \eqref{eq:k-kernel} and reality,
\begin{equation}
  \big[\KA^\dagger\KA\big]^{aa'}(k,k)=g^2\sum_{\ell}\sum_{b,c,c'}f^{bca}f^{bc'a'}\,\frac{k_i k_j\,\overline{A_i^c(\ell)}\,A_j^{c'}(\ell)}{(k+\ell)^2\,k^2}.
  \label{eq:kdagk-raw}
\end{equation}
Setting $a=a'$ and summing collapses the color factor through the adjoint Casimir identity $\sum_{a,b}f^{cab}f^{c'ab}=N\delta^{cc'}$, which contracts $c'=c$, and the substitution $q=-\ell$ with reality brings the result to \eqref{eq:sigma-explicit}. In this step transversality has been used to replace the loop numerator $(k-q)_j$ by $k_j$ through $q_jA_j^c(-q)=0$.
\end{proof}

Under these hypotheses the color-averaged dressing satisfies $d_{\mathrm{av}}(k,A)=1+\sigma_2(k,A)+O(A^3)$. If the exact form factor is defined by $d_{\mathrm{av}}(k,A)=1/(1-\sigma(k,A))$, then $\sigma_2$ is the leading term in the expansion of $\sigma$. The all-order equivalence between Gribov's no-pole condition and Zwanziger's horizon condition concerns their implementation in the functional measure \cite{CapriDudalGuimaraesPalharesSorella2013} and is distinct from any configuration-by-configuration relation between $\sigma_2(k,A)$ and the lowest eigenvalue of $\KA$.

\subsection{The no-pole diagnostic against the exact threshold}
\label{sec:nopole-vs-threshold}

The exact horizon criterion and the quadratic Born diagnostic probe different spectral functionals of the same operator. The first Gribov region is characterized by $\lambda_{\min}(\KA)>-1$, and its boundary by $\lambda_{\min}(\KA)=-1$, whereas the quadratic diagnostic is the probe-dependent second moment
\begin{equation}
  \sigma_2(k,A)
  =\frac{1}{r}\sum_{a=1}^{r}\langle k,a|\KA^2|k,a\rangle
  =\sum_n\mu_n^2\,W_n(k),
  \qquad
  W_n(k)=\frac{1}{r}\sum_{a=1}^{r}|\langle k,a|\eta_n\rangle|^2,
  \label{eq:sigma-second-moment}
\end{equation}
in which the square $\mu_n^2$ discards the sign, so a positive eigenvalue $+\beta$ and a negative eigenvalue $-\beta$ contribute equally although only the negative branch can reach the horizon through $-1$. The color average also makes the normalization explicit: the weight of a single normalized eigenvector at one momentum is $W_n(k)$, not unity. Consequently, $\sigma_2(k,A)$ is not expected to approach one merely because $\mu_{\min}\to-1$. Its relation to the extremal eigenvalue depends on the signed spectrum and on the distribution of the momentum-color overlaps.

The periodic family permits an explicit comparison after projection onto a single charged color sector. The corresponding sector-resolved Born diagonal is not the color-averaged quantity in \eqref{eq:sigma-second-moment}. It couples the probe momentum first to a degenerate pair of nearest neighbors, while the omitted channels enter through a Feshbach self-energy. For $Q=m=1$ this sector-resolved quadratic estimate gives $\anp=2$ against the exact $\acrit\approx1.9053$. The agreement follows from the rapidly convergent Jacobi continued fraction and is not a generic identity between the color-averaged Born moment and the horizon eigenvalue. The low spectrum of $\Mfp_A$ cannot be identified directly with that of $1+\KA$, because the congruence preserves inertia but not numerical eigenvalues. With $\eta=\Hzero^{1/2}\omega$ the Rayleigh quotient of $\Mfp_A$ becomes
\begin{equation}
  \frac{\langle\omega,\Mfp_A\omega\rangle}{\langle\omega,\omega\rangle}=\frac{\langle\eta,(1+\KA)\eta\rangle}{\langle\eta,\Hzero^{-1}\eta\rangle},
  \label{eq:rayleigh-congruence}
\end{equation}
so a small eigenvalue of $\Mfp_A$ can result from a small numerator, from the infrared growth of the denominator through $\|\Hzero^{-1}\|$, or from both. This distinction matters as the volume increases, because the observed drift of the lattice Faddeev-Popov spectrum toward the origin is compatible with eigenvalues of $\KA$ approaching $-1$ but does not establish it without a direct analysis of the normalized operator \cite{SternbeckIlgenfritzMuellerPreussker2005,CucchieriMendes2013}.

The fixed-background pole and the infrared behavior of the ensemble-averaged ghost dressing are distinct: the latter additionally involves the functional average over gauge fields. At fixed $A$ the Green function is $G_A(k)=k^{-2}\langle k|(1+\KA)^{-1}|k\rangle$, while the quantum propagator contains the functional average $\langle G_A(k)\rangle_A$. In four-dimensional Landau gauge, refined Gribov-Zwanziger theory was introduced precisely to accommodate the lattice pattern of a gluon propagator that remains finite at zero momentum together with a ghost propagator that is not infrared enhanced \cite{DudalGraceySorellaVandersickelVerschelde2008}. The gauge-field average depends on the full gauge-fixed measure, including the Faddeev-Popov determinant, correlations among configurations, condensate-induced infrared scales, the distribution of Gribov copies, the overlap residues, and the order of the zero-momentum and infinite-volume limits \cite{VandersickelZwanziger2012,EichmannPawlowskiSilva2021}. The normalized operator instead supplies a configuration-resolved spectral diagnostic through its lowest eigenvalue and the associated momentum-color overlaps.

\section{Periodic \texorpdfstring{$SU(2)$}{SU(2)} backgrounds and the Mathieu reduction}
\label{sec:mathieu}

A tractable example is provided by a periodic transverse $SU(2)$ background in $d=2$, following the use of harmonic Yang-Mills fields in stability analyses \cite{Savvidy1977,BrownWeisberger1979}. We set $g=1$ in this unit-torus spectral calculation and restore the coupling explicitly in the volume-scaling formulas below.
\begin{equation}
  A_x^a=0,\qquad A_y^3=a\cos(Qx),\qquad A_y^1=A_y^2=0.
  \label{eq:periodic-background}
\end{equation}
Up to a global color rotation and normalization this is the $q_2=0$ specialization of the embedded-Abelian periodic family analyzed by Orland and Semenoff in the orbit-space framework, including the associated Faddeev-Popov quadratic form and its relation to the Gribov horizon \cite{OrlandSemenoff2000}. We work on the reduced ghost space with the constant adjoint modes removed, so the threshold is the first nontrivial crossing rather than a background-independent zero mode generated by global gauge rotations. Transversality holds because $A_y^3$ depends on $x$ alone. Only the color-three component is excited, and the perturbation $-f^{a3b}A_y^3\partial_y$ carries $\epsilon^{a3b}$, which vanishes whenever $a$ or $b$ equals three, so the color-three ghost decouples and remains free while the charged components mix. Passing to $\omega^\pm=(\omega^1\pm i\omega^2)/\sqrt2$ diagonalizes the two-by-two block into $(\VA\omega)^\pm=\mp iA_y^3\partial_y\omega^\pm$, the two charged sectors being complex conjugates with a common real spectrum, and retaining the plus sector gives $\Mfp_+=-\partial^2-iA_y^3\partial_y$.

The residual translation invariance in $y$ separates the transverse coordinate: writing $\omega^+(x,y)=e^{imy}\phi(x)$ replaces $\partial_y$ by $im$, turns $-iA_y^3\partial_y$ into $mA_y^3$, and reduces the zero-mode equation to a one-dimensional Schr\"odinger problem with a cosine potential,
\begin{equation}
  \left[-\frac{d^2}{dx^2}+m^2+ma\cos(Qx)\right]\phi(x)=0.
  \label{eq:mathieu-plus}
\end{equation}
The Fourier substitution $\phi(x)=\sum_n\omega_ne^{inx}$ converts this into the three-term recurrence
\begin{equation}
  (n^2+m^2)\omega_n+\frac{ma}{2}(\omega_{n-Q}+\omega_{n+Q})=0,
  \label{eq:mathieu-chain}
\end{equation}
in which the cosine couples each site only to its neighbors at distance $Q$. The recurrence splits into residue classes modulo $Q$, each defining a two-sided Jacobi chain indexed by $j\in\mathbb Z$. In the even sector used below, reflection symmetry identifies the amplitudes at $\pm jQ$ and reduces the problem to a half-line Jacobi operator. The mode $m=0$ yields only the constant solution already removed. For $|m|\ge2$, the operator is bounded below by $m^2-|m|a$. Since the $|m|=1$ crossing occurs at $\acrit\approx1.9053<2$, all sectors with $|m|\ge2$ remain positive at that amplitude. The first nontrivial crossing therefore occurs in the $|m|=1$ sector.

For $Q=1$ and $m=1$ three independent computations agree to numerical resolution: direct diagonalization of \eqref{eq:mathieu-chain}, construction of $\KA$ followed by extraction of its lowest eigenvalue, and bisection of $\Mfp(a)$ on the mode. They return
\begin{equation}
  \acrit=\frac{1}{|\kappa_{\min}|},\qquad \kappa_{\min}\approx-0.5248,\qquad \acrit\approx1.9053,
  \label{eq:acrit-q-one}
\end{equation} Rescaling $x=2u$ carries \eqref{eq:mathieu-plus} at $Q=m=1$ into the canonical Mathieu form $\phi_{uu}+(\alpha-2q\cos 2u)\phi=0$ with $\alpha=-4$ and $q=2a$, and the torus period $2\pi$ in $x$ becomes the period $\pi$ in $u$ carried by the even solution $\mathrm{ce}_0$, so the critical amplitude is fixed by the implicit condition
\begin{equation}
  a_0(2\acrit)=-4,
  \label{eq:mathieu-characterization}
\end{equation}
where $a_0$ is the lowest even Mathieu characteristic value \cite{Mathieu1868,McLachlan1947,MeixnerSchafke1954,AbramowitzStegun1964}. This is an exact special-function characterization of $\acrit$, independent of any chain truncation. A direct evaluation of \eqref{eq:mathieu-characterization} returns $2\acrit\approx3.8106$, in agreement with \eqref{eq:acrit-q-one}.

\subsection{Numerical implementation and convergence}
\label{sec:numerical-reproducibility}

Numerical values for the one-dimensional periodic sector are obtained from the Fourier truncation $n=-N,\ldots,N$ of \eqref{eq:mathieu-chain}. At fixed $Q=m=1$ the charged block is a real symmetric tridiagonal matrix of dimension $2N+1$, with diagonal entries $n^2+1$ and nearest-neighbor entries $a/2$. The neutral color-three sector is free and is not included in this block. The threshold is located by bracketing the zero of its lowest eigenvalue in $1.5<a<2.2$ and applying bisection or Brent iteration until the change in $a$ is below $10^{-13}$. The normalized operator is constructed independently as $K_N=H_{0,N}^{-1/2}V_NH_{0,N}^{-1/2}$ at unit amplitude, so that a second estimate is $a_{\mathrm c,N}=-1/\kappa_{\min,N}$. The Mathieu value is obtained from the root of $a_0(2a)+4=0$ and provides an independent benchmark. The convergence is shown in \cref{tab:numerical-convergence}.

\begin{table}[H]
\centering
\caption{Convergence of the $Q=m=1$ horizon threshold under the truncation $|n|\le N$. Reported digits are restricted to those stable under increasing $N$ and consistent with the Mathieu characteristic value.}
\label{tab:numerical-convergence}
\begin{tabular}{cccc}
\toprule
$N$ & $a_{\mathrm c,N}$ from $\Mfp$ & $\kappa_{\min,N}$ from $K_N$ & $-1/\kappa_{\min,N}$ \\
\midrule
4  & 1.905316059220 & -0.524847305601 & 1.905316059220 \\
6  & 1.905316041156 & -0.524847310577 & 1.905316041156 \\
8  & 1.905316041156 & -0.524847310577 & 1.905316041156 \\
10 & 1.905316041156 & -0.524847310577 & 1.905316041156 \\
20 & 1.905316041156 & -0.524847310577 & 1.905316041156 \\
40 & 1.905316041156 & -0.524847310577 & 1.905316041156 \\
\bottomrule
\end{tabular}
\end{table}

Unless otherwise stated, the numerical calculations employ $N=40$. Repeating the threshold calculation with $N=20,30,40$ changes $a_{\mathrm c}$ by less than $10^{-12}$, which gives a conservative truncation uncertainty below $10^{-12}$ for $a_{\mathrm c}$. The supplementary numerical package reproduces the threshold, the normalized eigenvalue, the Mathieu benchmark, the large-$Q$ checks, the semiclassical counting test, and the diagonal-resolvent data.

The large-$Q$ behavior is governed by a degenerate three-site reduction. Keeping the sites $\{-Q,0,Q\}$ and using the even solution $\omega_{-Q}=\omega_Q=\phi$, $\omega_0=\psi$, the recurrence closes as $(Q^2+m^2)\phi+\tfrac{ma}{2}\psi=0$ and $m^2\psi+ma\phi=0$, the second equation carrying the factor $ma$ rather than $ma/2$ because both neighboring modes contribute to the equation for $n=0$. Eliminating the $n=0$ site gives
\begin{equation}
  a_3^2=2(Q^2+m^2),\qquad a_3\sim\sqrt2\,Q,
  \label{eq:three-level-result}
\end{equation}
and the converged chain reproduces this coefficient, the ratios $\acrit(Q)/Q$ for $Q=1,2,5,10,20,40$ descending toward $\sqrt2$ from above. For $Q>1$ the residue-class chain is truncated as $n=jQ$ with $|j|\le40$. Repeating the calculation with $|j|\le30$ changes the quoted thresholds by less than $3\times10^{-9}$ for the displayed range $Q\le40$. Retaining only the pair $\{0,Q\}$ halves the coupling of the $n=0$ site and produces the incorrect leading coefficient $2Q$, missing the degeneracy of the $\pm Q$ pair.

In the single charged sector used for the Mathieu reduction, the sector-resolved Born diagonal retains exactly the loop connecting the probe $p=(0,m)$ to its two neighbors $k_\pm=(\pm Q,m)$. Since $A_y^3(\pm(Q,0))=a/2$ and $|k_\pm|^2=Q^2+m^2$, the two channels give
\begin{equation}
  \sigma_{\mathrm{Born}}(p;A)=\langle p|\KA^\dagger\KA|p\rangle=2\cdot\frac{(a/2)^2m^2}{(Q^2+m^2)m^2}=\frac{a^2}{2(Q^2+m^2)},
  \label{eq:born-sigma}
\end{equation}
This is a sector-resolved quantity defined within one charged block, distinct from the color-averaged $\sigma_2$ of \eqref{eq:sigma-explicit} and \eqref{eq:sigma-second-moment}. The corresponding Born threshold, defined by $\sigma_{\mathrm{Born}}=1$, is $\anp^2=2(Q^2+m^2)$, which coincides with the three-site truncation \eqref{eq:three-level-result}. Both retain only the nearest-neighbor loop and omit returns through more distant sites. Within this charged block, the quadratic Born term corresponds to the nearest-neighbor truncation of the diagonal resolvent expansion. For $Q=m=1$, it gives $\anp=2$, compared with the exact threshold $\acrit\approx1.9053$, and overestimates the critical amplitude by about five percent.

The exact threshold follows from restoring the tail as a Feshbach self-energy. Ordering the even chain as $\omega_0=\psi$, $\omega_{\pm Q}=\phi_1$, $\omega_{\pm2Q}=\phi_2$, and so on, the ratios $r_j=\phi_j/\phi_{j-1}$ obey the recurrence $r_j=-\tfrac{ma}{2}\big[(jQ)^2+m^2+\tfrac{ma}{2}r_{j+1}\big]^{-1}$ for $j\ge2$, and eliminating the tail dresses the $\phi_1$ site with the effective denominator
\begin{equation}
  D_1^{\mathrm{eff}}=Q^2+m^2-\cfrac{(ma/2)^2}{4Q^2+m^2-\cfrac{(ma/2)^2}{9Q^2+m^2-\cdots}},
  \label{eq:d1-eff}
\end{equation}
which is the Schur complement of the half-line Jacobi operator onto the $\{\psi,\phi_1\}$ block. Eliminating $\psi$ through the $n=0$ equation closes the horizon condition as the self-consistent equation $a^2=2D_1^{\mathrm{eff}}(a)$. Deleting the self-energy returns the Born value. Near threshold the continued fraction subtracts a positive contribution, so $D_1^{\mathrm{eff}}<Q^2+m^2$. This places the exact threshold below the Born estimate and explains the direction of the five-percent discrepancy. Truncating at the $\pm2Q$ channels, that is setting $r_3=0$, retains one self-energy term and reduces the horizon condition to a quadratic in $a^2$ whose solution is the closed form
\begin{equation}
  a_5^2=\frac{4(Q^2+m^2)(4Q^2+m^2)}{8Q^2+3m^2}.
  \label{eq:a5}
\end{equation}
At $Q=m=1$ this gives $a_5=2\sqrt{10/11}\approx1.907$, differing from the exact $\acrit\approx1.9053$ by one part in a thousand. Expanding \eqref{eq:a5} at large $Q$ with $m$ fixed,
\begin{equation}
  \frac{a_5}{Q}=\sqrt2+\frac{7\sqrt2}{16}\frac{m^2}{Q^2}-\frac{69\sqrt2}{512}\frac{m^4}{Q^4}+O(Q^{-6}).
  \label{eq:a5-expansion}
\end{equation}
Equation \eqref{eq:a5-expansion} shows that the $\pm Q$ pair fixes the leading coefficient $\sqrt2$, the $\pm2Q$ pair determines the $Q^{-2}$ correction, and each successive pair contributes two additional inverse powers of $Q$.

\section{Infrared volume laws}
\label{sec:infrared-volume}

On a torus of side $L$ the smallest nonzero momentum is $\Qmin=2\pi/L$, and the longest transverse mode available to the periodic family is $A_2^3(x)=a\cos(\Qmin x_1)$. Measuring every coupled momentum in units of $\Qmin$ maps the dimensionful recurrence onto \eqref{eq:mathieu-chain} verbatim, the coupling and the amplitude entering only through the dimensionless combination $\tilde a=ga/\Qmin$. The first horizon crossing sits at the same universal value $\tilde a=\atil\approx1.9053$ found for the unit-side reference case, so the critical amplitude at side $L$ is
\begin{equation}
  \acrit(L)=\frac{\Qmin}{g}\,\atil=\frac{2\pi\atil}{gL}.
  \label{eq:volume-law}
\end{equation}
Within the periodic class this is exact at every $L$, not a large-volume fit, and it states that the amplitude needed to reach the horizon falls as $1/L$ as the box grows.

The associated field strength is Abelian in color, since only the third component is excited and the transverse field has a single nonzero entry, $F_{12}^3=\partial_1 A_2^3=-a\Qmin\sin(\Qmin x_1)$, the commutator term dropping out. The Euclidean action $S=\tfrac14\int(F_{\mu\nu}^a)^2=\tfrac12\int(F_{12}^3)^2$ then evaluates, using $\int\sin^2(\Qmin x_1)\dd^dx=L^d/2$, to $S=\tfrac14a^2\Qmin^2L^d$, and at the critical amplitude \eqref{eq:volume-law} the volume dependence collapses to a single power,
\begin{equation}
  \Scrit(L)=\frac{a_{\mathrm c}^2\Qmin^2L^d}{4}=\frac{4\pi^4\atil^2}{g^2}\,L^{d-4}.
  \label{eq:action-cost}
\end{equation}
The exponent $d-4$ determines the volume dependence of the critical action. Below four dimensions the cost of the longest critical mode vanishes as $L\to\infty$, at four dimensions it is volume-independent at fixed coupling, and above four dimensions it grows without bound. This threshold is set by the action and is distinct from the Sobolev endpoint at $d=2$ that governs the operator estimates \eqref{eq:k-norm-bound} and \eqref{eq:cwikel-bound}.

These scaling laws characterize the prescribed periodic family. Extending them to a counting of critical directions in the Yang-Mills ensemble would require the gauge-fixed measure, including the Faddeev-Popov determinant and correlations among modes and copies.

\section{Relation to lattice spectra and refined Gribov-Zwanziger theory}
\label{sec:status}

Lattice calculations show that low Faddeev-Popov eigenvalues move toward the origin as the volume increases and that their eigenvectors contribute to the ghost propagator \cite{SternbeckIlgenfritzMuellerPreussker2005,CucchieriMendes2013,OliveiraPaivaSilva2023}. The congruence does not convert that observation directly into accumulation of eigenvalues of $\KA$ at $-1$, because numerical eigenvalues are not preserved. Equation \eqref{eq:rayleigh-congruence} shows that the free infrared scale carried by $\Hzero^{-1}$ also affects the small spectrum of $\Mfp_A$. A direct lattice test should therefore normalize the measured Faddeev-Popov operator. After removing the constant adjoint modes one may define
\begin{equation}
  K_U^{\mathrm{lat}}
  =\big(\Hzero^{\mathrm{lat}}\big)^{-1/2}
   \big(\Mfp_U^{\mathrm{lat}}-\Hzero^{\mathrm{lat}}\big)
   \big(\Hzero^{\mathrm{lat}}\big)^{-1/2},
  \label{eq:lattice-bs-operator}
\end{equation}
using the free lattice Laplacian on the same projected ghost space. Its lowest eigenvalue and the momentum-color overlaps of the associated eigenvectors are the quantities directly comparable with \eqref{eq:bs-criterion} and \eqref{eq:dressing-spectral}.

Refined Gribov-Zwanziger theory incorporates dimension-two condensates and yields an infrared-finite gluon propagator together with a ghost propagator without the enhancement of the original Gribov-Zwanziger scaling solution \cite{DudalGraceySorellaVandersickelVerschelde2008,VandersickelZwanziger2012}. Functional and lattice studies support the same qualitative infrared pattern \cite{SternbeckIlgenfritzMuellerPreussker2005,EichmannPawlowskiSilva2021}. A pole of $(1+\KA)^{-1}$ at a fixed horizon configuration is compatible with this behavior because the ensemble propagator involves a nontrivial gauge-field average. At the configuration level, $\KA$ separates the signed spectral distance to the horizon, measured by $1+\lambda_{\min}(\KA)$, from the probe-dependent overlap weights entering the diagonal ghost resolvent and from the unsigned quadratic moment entering the Born diagnostic.

\section{Conclusion}
\label{sec:conclusion}

After the constant adjoint modes are projected out, the Landau-gauge Faddeev-Popov zero-mode problem admits a Birman-Schwinger formulation on a compact torus through an identity of quadratic forms. For $d\ge3$ and transverse $A\in L^d$, the interaction form is infinitesimally form-bounded with respect to the mean-zero Laplacian, which defines a self-adjoint Faddeev-Popov operator with compact resolvent and a bounded self-adjoint normalized operator $\KA$. The map $u\mapsto\Hzero^{1/2}u$ preserves the positive, negative, and null indices, while the numerical spectra of the two operators remain distinct. The first Gribov horizon is therefore characterized by $-1\in\spec(\KA)$, and the periodic Cwikel estimate yields $s_n(\KA)\le C_{d,N}g\|A\|_{L^d}n^{-1/d}$ for $d\ge3$.

In two dimensions, the Birman-Schwinger count involves the Schrödinger form compressed to the mean-zero space. Since this compression has codimension one in the scalar problem, the negative index differs from that of the unprojected operator by at most one, and the failure of the uniform $L^2$ estimate persists. The positive endpoint is controlled by the periodic Cwikel-Solomyak estimate in the Orlicz class $ |f|^2\in L\log L$, equivalently $f\in L_\Phi$ with $\Phi(t)=t^2\log(e+t^2)$. All smooth two-dimensional backgrounds in the solvable sector satisfy this hypothesis.

For the periodic transverse $SU(2)$ family, the charged sector reduces to a Mathieu recurrence and the first nontrivial $Q=m=1$ crossing satisfies $a_0(2\acrit)=-4$. The numerical threshold is stable under increasing Fourier cutoff and agrees with the independent Mathieu benchmark within the quoted truncation error. At fixed $m=1$, the chain index $n=0$ corresponds to the nonzero full momentum $p=(0,1)$. The color-averaged quadratic Born term is a weighted second spectral moment, whereas the sector-resolved Born approximation in the Mathieu block can be compared directly with the exact threshold and its Feshbach corrections.

The fixed-background resolvent is defined before the functional average over gauge configurations, whereas the ensemble propagators are obtained only after averaging with the gauge-fixed measure. Ensemble observables depend on the full measure, including the Faddeev-Popov determinant, correlations among configurations, condensate effects, Gribov-copy selection, and the order of the infrared and infinite-volume limits. This distinction is compatible with the refined Gribov-Zwanziger decoupling scenario. On the lattice, the normalized operator can be tested directly through its lowest eigenvalues and momentum-color overlaps, providing access to configuration-resolved spectral information that is not contained in the low spectrum of the unnormalized Faddeev-Popov operator alone.

The exact volume law and classical-action scaling characterize the periodic family; their ensemble significance remains to be tested. Extending the normalized spectral diagnostic to representative $SU(3)$ lattice ensembles, with explicit control of volume, copy selection, and eigenvector overlaps, provides a direct test of whether the fixed-background threshold mechanism leaves measurable signatures in ensemble data.

\end{document}